\newcommand*{\rom}[1]{\expandafter\@slowromancap\romannumeral #1@}
\newtheorem{proposition}{Proposition}
\newtheorem{definition}{Definition}
\renewcommand{\v}[1]{\ensuremath{\boldsymbol{#1}}}
\title{On the privacy of federated Clustering: A Cryptographic View}
\begin{document}
\name{ Qiongxiu Li$^{1}$ and Lixia Luo$^{2}$}
\address{$^{1}$ Tsinghua University, China \\$^{2} $ Huawei Inc. China}
\ninept
\maketitle
\raggedbottom

\addtolength{\abovedisplayskip}{-1.0mm}
\addtolength{\belowdisplayskip}{-1.0mm}

\begin{abstract}
The privacy concern in federated clustering has attracted considerable attention in past decades.  
Many privacy-preserving clustering algorithms leverage cryptographic techniques like homomorphic encryption or secure multiparty computation, to guarantee full privacy, i.e., no additional information is leaked other than the final output.  However, given the iterative nature of clustering algorithms, consistently encrypting intermediate outputs, such as centroids, hampers efficiency. This paper delves into this intricate trade-off, questioning the necessity of continuous encryption in iterative algorithms. Using the federated K-means clustering as an example, we mathematically formulate the problem of reconstructing input private data from the intermediate centroids as a classical cryptographic problem called hidden subset sum problem (HSSP)– extended from an NP-complete problem called subset sum problem (SSP). Through an in-depth analysis, we show that existing lattice-based HSSP attacks fail in reconstructing the private data given the knowledge of intermediate centroids, thus it is secure to reveal them for the sake of efficiency. 
To the best of our knowledge, our work is the first to cast federated clustering's privacy concerns as a cryptographic problem HSSP such that a concrete and rigorous analysis can be conducted.
\end{abstract}
\begin{keywords}
Privacy, federated learning, lattice attack, K-means clustering, hidden subset sum problem
\end{keywords}
\section{Introduction}
\label{Sect-Intro}
Clustering is a popular unsupervised
learning technique that plays a crucial role in data processing and analysis. By categorizing data into subgroups with similar properties, it provides a powerful means to discover hidden patterns and group similar entities. While clustering techniques are effective, their full potential can only be harnessed when they are applied to comprehensive data sets. In the age of digitalization, datasets often reside across multiple, distributed databases. Federated clustering addresses this challenge by operating directly on decentralized nodes, with each retaining a fragment of the overall dataset \cite{mcmahan2017communication}. However, as data becomes more dispersed, so do concerns about its privacy. The process of clustering, even in a federated setup, poses potential risks of revealing sensitive information. For instance, for certain small or distinct clusters, individual entities might be identifiable. Thus, there is an emerging need for privacy-preserving clustering, which not only allows for decentralized data pattern discovery but also ensures that individual data points remain confidential. 

Existing privacy-preserving federated clustering predominantly fall under two categories: differential privacy (DP)  approaches which insert noise to achieve a rigorous privacy guarantee \cite{stemmer2021locally,balcan2017differentially,su2017differentially} and secure computation approaches \cite{bunn2007secure,mohassel2019practical,kim2018privacy,meng2019private,jha2005privacy,erkin2009privacy,samet2007privacy,liu2014privacy,jiang2020efficient,li2022privacy,sakuma2010large,yuan2017practical} which employ cryptographic tools such as homomorphic encryption (HE) \cite{paillier1999public} or secure multiparty computation (SMPC) \cite{Cramer2015} to allow computation among distrust parties. Typically, DP-based approaches compromise utility for privacy while secure computation requires higher complexity. Comprehensive surveys of these approaches can be found in \cite{meskine2012privacy,hegde2021sok}.

In the realm of clustering algorithms, except the private data,  intermediate outputs such as cluster centroids, cluster assignments, and cluster sizes, emerge potential privacy risks. As highlighted in \cite{hegde2021sok}, it is very difficult to concretely determine whether leaking intermediate information would breach privacy or not, thus it is better to ensure full privacy, i.e.,  no information can be leaked except what the final output. However, full privacy incurs very high complexities as it requires protection for each and every update over the whole iterative process, thereby hindering efficiency and scalability. 
This has spurred the exploration of alternative approaches that improve efficiency by allowing disclosure of some intermediate data, such as cluster centroids \cite{jha2005privacy,erkin2009privacy,samet2007privacy,liu2014privacy,jiang2020efficient,li2022privacy} or the cluster assignments \cite{sakuma2010large,yuan2017practical}. This raises an imperative question:  Is there an inevitable trade-off between privacy and efficiency? Moreover, does the disclosure of intermediate updates gravely jeopardize privacy? To answer these questions, one of the fundamental challenges stems from the intricacies involved in quantitatively assessing the privacy effect of such disclosures. In this paper, we take the first step to conduct a rigorous analysis of this problem in the context of federated K-means clustering. Our main contributions are summarized as follows:
\begin{itemize}
    \item We analyze the privacy effect of revealing intermediate centroids in federated K-means clustering and mathematically formulate it a cryptographic problem known as HSSP. To the best of our knowledge, this is the first work in this context.
    \item With the lattice-based tools tailored for HSSP, we demonstrate a rigorous analysis of the privacy effect. Our findings show that it is very difficult for the adversary to reconstruct the private data given intermediate centroids. This suggests that disclosing intermediate centroids might not inherently jeopardize privacy, thus suggesting that  efficiency may not need to sacrifice privacy. 
\end{itemize}

\section{Preliminaries}
This section reviews necessary fundamentals for the rest of the paper. 

\subsection{Federated K-means clustering}
In a federated setting with $n$ nodes/agents/participants, each and every node of them, denote by $i$, possesses private data $ x_{i}$\footnote{For simplicity, we assume $x_i$ is a scalar and the results can easily be generalized to arbitrary dimensions.}. The goal of federated K-means clustering is to partition the private data $x_i$'s of all nodes into K clusters, the cluster labels are denoted by $\mathcal{K}=\{1,2,...,k\}$, with each cluster being represented by its center $ \mu_{j}, j\in \mathcal{K}$. Let $l_i^{(t)}$  denote the cluster index of node $i$ at iteration $t$, and $\mu_{j}^{(t)}$ denote the center of the $j$-th cluster. Let $t_{\max}$ denote the maximum number of iterations and $\mathcal{T}=\{1,2,\ldots,t_{\max}\}$. For each iteration $t\in \mathcal{T}$, federated K-means clustering consists of two steps:

1) Local cluster assignment:
Every node computes locally, determining which cluster it belongs to based on the distance to the current centroids:
\begin{align}\label{eq.clusterLabel}
 l_i^{(t)}= \arg\min_j \| x_{i}- \mu_{j}^{(t)})\|_2^{2}, j \in \mathcal{K},
\end{align}
where $\|\cdot\|_{2}$ denotes the Euclidean norm. For notation simplicity, we will omit the subscript in subsequent discussions.

2) Centroid  aggregation:
Each node computes its local centroid update. These local updates are then sent to a central coordinator to update the global centroids:
\begin{align}\label{eq.clustercenter}
  \mu_{j}^{(t+1)}= \frac{1}{n_{j}} \sum_{ x_{i} \in {\cal C}_{j}^{(t)} }  x_{i}, j \in \mathcal{K},
\end{align}
where ${\cal C}_{j}^{(t)}=\{ x_{i}| l_{i}^{(t)}=j\}$ denotes the set of all data points within the $j$-th cluster at the $t$-th iteration. The size of this set is given by $n_{j}=|{\cal C}_{j}^{(t)}|$.

\subsection{Threat model}
We assume that the central server is semi-honest (also known as passive or honest-but-curious), it would not change the model parameters maliciously but it can collect information through the learning process. Assume the central server only has the knowledge of all intermediate centroids, its goal is to reconstruct the input data samples given the available information. 





\subsection{Hidden subset sum problem}
We now introduce two classical cryptographic problems that are closely related to federated K-means clustering, as we will show later. 
\begin{definition}\textbf{Subset Sum Problem (SSP)}
Let $n$ be an integer. Given $x_1, \ldots, x_n \in \mathbb{Z}$ and $t \in \mathbb{Z}$, compute $a_1, \ldots, a_n \in \{0, 1\}$ such that
$$
t=a_1 x_1+ a_2 x_2+\cdots+a_n x_n
$$
if they exist.
\end{definition}
Note that SSP is a well-known NP-complete problem, thus currently it is unknown if a deterministic polynomial algorithm exists \cite{dasgupta2008algorithms}.

\begin{definition}\textbf{Hidden Subset Sum Problem (HSSP)}
Let $Q$ be an integer, and let $x_1, \ldots, x_n$ be integers in $\mathbb{Z}_Q$. Let $\v a_1, \ldots, \v a_n \in \mathbb{Z}^m$ be vectors with components in $\{0,1\}$. Let $\v h=\left(h_1, \ldots, h_m\right) \in$ $\mathbb{Z}^m$ satisfying:
\begin{align}\label{eq.hssp}
   \v h= \v a_1 x_1+ \v a_2 x_2+\cdots+\v a_n x_n \quad(\bmod ~Q)  
\end{align}
Given the modulus $Q$ and the sample vector $\v h$, recover the vector $\boldsymbol{x}=\left(x_1, \ldots, x_n\right)$ and the vectors $\v a_i$'s, up to a permutation of the $x_i$'s and $\v a_i$'s.
\end{definition} \label{def.hssp}
We note that HSSP is an extension of SSP, as in SSP, only the weights $a_i$'s are hidden. While in HSSP, both the mixturing weights $\v a_i$ 's and input $x_i$ 's are hidden.  In fact, it is shown that, under certain parameter settings, an HSSP-solver could be used to attack SSP \cite{gini2022hardness}, e.g., when an HSSP instance having a unique solution can be generated, solving such HSSP is sufficient for solving SSP.




\section{Lattice-based HSSP attacks}
In this section, we will first introduce the necessary fundamentals of lattices and then introduce the Nguyen-Stern attack (NS attack), which is the very first HSSP attack raised by Nguyen and Stern \cite{nguyen1999hardness}.  Note that there are also two improved attacks based on NS attack, we refer readers to  \cite{coron2020polynomial,coron2021provably} for more details. 
\subsection{Fundamentals of lattices}
A lattice is a discrete subgroup of $\mathbb{R}^m$, which can be defined as follows \cite{cassels1997introduction}.
\begin{definition} \textbf{Lattice}
Given $n$ linearly independent vectors $\v b_1,\dots,\v b_n$ in $\mathbb{R}^m$, the lattice generated by the basis $
\v b_1,\dots,\v b_n$ is defined as the set
\begin{equation*}
\mathcal{L}(\v b_1,\dots,\v b_n)=\left\{\sum_{i=1}^{n}x_i\v b_i|x_i\in \mathbb{Z},i=1,\dots, n\right\},
\end{equation*}
\end{definition}
\noindent where $m$ denotes the dimension of the lattice $\mathcal{L}$, denoted as $\dim (\mathcal{L})$ and $n$ denotes its rank. In this paper, we generally consider integer lattices, i.e., the lattices belonging to $\mathbb{Z}^m$.
\begin{definition} \textbf{Orthogonal lattice}
Let $\mathcal{L}\subseteq \mathbb{Z}^m$ be a lattice. Its orthogonal lattice is given as
$$ \mathcal{L}^{\bot}:=\{\v y\in \mathbb{Z}^m \arrowvert \forall \v x\in \mathcal{L},\langle \v x,\v y\rangle=0\},$$
where $\langle\ ,\ \rangle$ is the inner product of $\mathbb{R}^m$.
Its orthogonal lattice modulo $Q$ is given as
$$ \mathcal{L}_Q^{\bot}:=\{\v y\in \mathbb{Z}^m \arrowvert \forall \v x\in \mathcal{L},\langle \v x,\v y\rangle\equiv 0 \mod Q\}.$$
\end{definition}
Note that $(\mathcal{L}^{\bot})^{\bot}$ contains the original lattice $\mathcal{L}$.



The so-called LLL and BKZ algorithms are important building blocks for designing HSSP attacks.
A brief introduction about them is given as follows:
\begin{itemize}
\item \textbf{LLL}:  The LLL algorithm \cite{lenstra1982factoring} is a polynomial time lattice reduction algorithm with a basis of a lattice $\mathcal{L}$ as input and output a basis which is LLL-reduced, meaning that the basis is short and  nearly orthogonal.
\item \textbf{BKZ}: 
The BKZ algorithm \cite{chen2011bkz} is also a lattice reduction algorithm. Increasing the block-size parameter of the algorithm improves the accuracy, at the cost of a longer computation time. Namely, BKZ-2 can produce an LLL-reduced basis in polynomial time, while with full block-size one can retrieve the shortest vector of the lattice in exponential time. 
\end{itemize}

\subsection{NS attack} \label{subsec.ns2step}
The whole process of the NS attack could be divided into two steps:
\begin{itemize}
    \item \textbf{Step 1}: The first step is based on orthogonal lattice attack. As the only known information in \eqref{eq.hssp} is $\v h$, the orthogonal lattice attack starts with computing the orthogonal lattice of $\v h$  modulus $Q$, i.e., $\mathcal{L}_{Q}^{\bot}(\mathbf h)$. As $\v h$ is  a linear combination of the hidden weight vectors $\{\v a_i\}_{i=1}^{n}$, the orthogonal lattice of  $\v A$, i.e., $\mathcal{L}^{\bot}(\v A)$, is then contained in $\mathcal{L}_{Q}^{\bot}(\mathbf h)$. Based on the fact that $\v a_i$'s are binary, the goal is to ensure the first $m-n$ short vectors of the LLL basis of $\mathcal{L}_{Q}^{\bot}(\mathbf h)$ form a basis of $\mathcal{L}^{\bot}(\v A)$.   After obtaining $\mathcal{L}^{\bot}(\v A)$, then compute its orthogonal  $(\mathcal{L}^{\bot}(\v A))^{\bot}$ using the LLL algorithm again, as  $(\mathcal{L}^{\bot}(\v A))^{\bot}$ contains our target lattice $\mathcal{L}(\v A)$. Note that the LLL algorithm is applied twice. The first time is to compute the LLL-reduced basis of $\mathcal{L}^{\bot}_Q(\v h)$; the second time is to compute the orthogonal lattice of $\mathcal{L}^{\bot}(\v A)$ \cite{nguyen2006merkle, chen2018computing}.
    \item \textbf{Step 2}: The second step is to recover the binary vectors $\v a_i$'s from a LLL-reduced basis of $(\mathcal{L}^{\bot}(\v A))^{\bot}$, after that the hidden private data vector $\v x$ can then be recovered. Since the short vectors found by Step 1 may not be short enough, thus the BKZ algorithm is applied in Step 2 to find shorter vectors.  Denote $\{\v v_i\}_{i=1}^{n}$ as the obtained short vectors in  $(\mathcal{L}^{\bot}(\v A))^{\bot}$. Assume that the $\v a_i$'s are the only binary vectors in $(\mathcal{L}^{\bot}(\v A))^{\bot}$, 
     it could be proved with high possibility that $\v v_i$'s are either $\v a_i$'s or $\{\v a_i-\v a_j\}$'s. Thus, take $n$ binary vectors in $\mathcal{L}(\{\v v_i\}_{i=1}^{n})$ as $\v a_i$'s. After obtaining $\v A$, pick up a $n\times n$ sub-matrix $\v A'$ from $\v A$ with non-zero determinant modulus $Q$. Then $\v A'\cdot \v x\equiv \v h' \mod Q$. Hence, the hidden private data can be recovered as $\v x\equiv \v A'^{-1}\v h' \mod Q.$
\end{itemize}


\begin{table*}
\begin{center}
\begin{tabular}{l|c|c}
\hline
& Random instance of HSSP:$~\v h=\v A \v x$ & K-means instance of HSSP:$~\v c=\v W \v x$  \\
\hline
Adversary's knowledge  &$\v h=\v a_1 x_1+\ldots+\v a_n x_n\in \mathbb{Z}^{m}_Q$ & $\v c=\v w_1 x_1+\ldots+\v w_n x_n\in \mathbb{R}^{m}$\\
\hline
Hidden private data &$\v x\in \mathbb{Z}^{n}_Q$ & $\v x\in \mathbb{R}^{n}$\\
\hline
\multirow{4}{*}{Hidden weight matrix} & $\v A\in \{0,1\}^{m\times n}$ & $\v W\in \{0,1\}^{m\times n}$\\

 & $\operatorname{Rank}(\v A)=n$& $\operatorname{Rank}(\v W)\leq n$\\

& $\mathbb{E}(\|\v a_i\|_{1})=\frac{m}{2}$ & $\|\v w_i\|_{1}=\frac{m}{k}$\\

\hline
Short vectors & $\{\v v_i\}_{i=1}^{n}\subset (\mathcal{L}^{\bot}(\v A))^{\bot}$ & $\{\v u_i\}_{i=1}^{n}\subset (\mathcal{L}^{\bot}(\v W))^{\bot}$\\
\hline
Recovered binary vectors & $\{ \hat{\v v}_i\}\subset \mathcal{L}(\{\v v_i\}_{i=1}^{n}) \cap \{0,1\}^m$ & $\{\hat{\v u}_i\}\subset \mathcal{L}(\{\v u_i\}_{i=1}^{n})\cap \{0,1\}^m$\\
\hline
Attack result& $\{\v a_i\}_{i=1}^{n}\cap \{ \hat{\v v}_i\}=\{\v a_i\}_{i=1}^{n}$ & 
$\{\v w_i\}_{i=1}^{n}\cap \{\hat{\v u}_i\}= \emptyset$\\
\hline
\end{tabular}
\caption{Comparisons of random instances and K-means instances of HSSP in terms of a number of key properties.} \label{tab:overview}
\end{center}
\end{table*}

\section{Analysis of NS attack in Federated K-means clustering}
We now delve into the above-described NS attack, focusing on its application in federated K-means clustering. 
We will show that given the knowledge of the intermediate clustering centers  $\v c$ in \eqref{eq.khsspC}, existing HSSP attacks are likely to fail in recovering the hidden private data $x_i$'s. This limitation predominantly stems from the fact that existing HSSP attacks require certain assumptions on the characteristics of the hidden weight vectors. However, such assumption is not satisfied in Federated K-means clustering. 
\subsection{Privacy of federated K-means clustering as a HSSP}
We first characterize \eqref{eq.clustercenter} with the following representation:
\begin{align}\label{eq.ksum}
\forall j\in\mathcal{K}: c_{j}^{(t+1)}=\v w_{j}^{(t)} \v x
\end{align}
where  $\{\v w_{j}^{(t)}\in \{0,1\}^{1\times n}\}_{t\in \mathcal{T}, j\in \mathcal{K}}$ is referred to as the weight vector characterizing the node selection process, i.e., the $i$-th entry of $\v w_{j}^{(t)}$ is set to $1$ if node $i$ is assigned to cluster $j$, i.e., $l_i^{(t)}=j$, otherwise it is zero. 
$c_{j}^{(t+1)}$ is the sum of all $x_i$'s assigned to cluster $j$. Note that revealing the sum $c_{j}^{(t+1)}$ can provide more information (and hence be riskier) than revealing the average $\mu_{j}^{(t+1)}$. This is because if an adversary knows the number of samples, it can compute the average from the sum, but the reverse isn't necessarily true.  
Given this, in what follows we will focus on $c_{j}^{(t+1)}$'s and show that even with the knowledge of these values, the private data is protected. 
Stacking the observed information across all clusters and all iterations, i.e.,  $\{c_{j}^{(t)}\}_{t\in \mathcal{T}, j \in \mathcal{K}}$ together we obtain
\begin{align} \label{eq.khssp}
\begin{bmatrix}
        c_{1}^{(1)} \\
        \vdots  \\
        c_{k}^{(t_{\max})}
    \end{bmatrix}
    = \begin{bmatrix}
        \v w_{1}^{(1)} \\
        \vdots  \\
        \v w_{k}^{(t_{\max})}
    \end{bmatrix}
    \begin{bmatrix}
        x_1 \\
        \vdots \\
        x_n
    \end{bmatrix},
\end{align} 
Further define $\v c= [c_{1}^{(1)},\ldots,c_{k}^{(t_{\max})}]^{T}\in \mathbb{R}^{kt_{\max}}$, $\v W=[\v w_{1}^{(1),T},\\ \ldots,\v w_{k}^{(t_{\max}),T}]^T \in \{0,1\}^{kt_{\max}\times n}$ and $ \v x=[x_{1}, x_{2},..., x_{n}]^T \in \mathbb{R}^{n}$. The above becomes
\begin{align}\label{eq.khsspC}
    \v c=\v W \v x
\end{align}
The problem then becomes \emph{Given the knowledge of $\v c$ in \eqref{eq.khsspC},  can the adversary recover the hidden private data $x_i$'s?}

\subsection{Random instances of HSSP}
Existing HSSP solvers predominantly focus on what is termed ' random instances of HSSP' \cite{nguyen1999hardness,coron2020polynomial,coron2021provably,lyubashevsky2010public}, which is defined as 
\begin{definition}\textbf{A random instance of HSSP} \label{def.rhssp}
Let $x_1, \ldots, x_n$ be integers sampled uniformly at random in $\mathbb{Z}_Q$ and $\v a_1, \ldots, \v a_n \in \mathbb{Z}^m$ be vectors with independent entries sampled uniformly over $\{0,1\}$. They satisfy \eqref{eq.hssp} and the goal is to recover the hidden $x_i$'s and $\v a_i$'s given $Q$ and $\v h$, according to Definition \ref{def.hssp}. 
\end{definition}
\noindent A direct implication of the above definition is that the statistical expectation of the $L_1$ norm of each $\v a_i$ should be $m/2$, i.e., $\forall~i=\{1,\ldots,n\}: ~\mathbb{E}(\|\v a_i\|_{1})=m/2$ where $\|\v \cdot\|_{1}$ denotes $L_1$ norm.



Recall that it is explicitly stated in Step 2 of Section \ref{subsec.ns2step}: \emph{Assume that the $\v a_i$'s are the only binary vectors in $(\mathcal{L}^{\bot}(\v A))^{\bot}$, it could be proved with a high possibility that $\v v_i$'s are either $\v a_i$'s or $\{\v a_i-\v a_j\}$'s.} 
We now explain in detail why $\v v_i$'s are likely to be the target hidden weight vectors  $\v a_i$'s themselves or a linear combination of two of them.  The main reason stems from the fact that the probability, of using  linear combination of over two vectors in $\{\v a_i\}$ to construct a vector in $\{-1,0,1\}^{m}$,  is very low.  
Take the linear combination of three vectors as an example, the following result holds \cite{gini2022hardness}.
\begin{proposition}
Let $\{\v a_1, \v a_2,\ldots, \v a_n\}$ be a set of $n$ binary vectors  with dimension $m$ and all entries has equal probability  being $0$ or $1$,  i.e., $\v a_i \in \{0,1\}^{m}$ and  $\mathbb{E}(\|\v a_i\|_{1})=m/2$.
Take three vectors $\v a_i, \v a_j, \v a_k$ and denote $\v v=\v a_i+\v a_j-\v a_k$ as a linear combination of them. 
Thus, the probability that $\v v \in \{-1,0,1\}^m$ is given by  
    \begin{align} \label{eq.svP}
        \operatorname{Pr}\big(\v v\in  \{-1,0,1\}^m \big) = (\frac{7}{8})^m
    \end{align}
For a small $\epsilon\in (0,1)$, a sufficient condition to guarantee the expectation of $\v v \in \{-1,0,1\}^m$ is less than $\epsilon$ is given by 
\begin{align}\label{eq.mnepsilon}
    m\geq 16\log n -6\log \epsilon.
\end{align}
\end{proposition}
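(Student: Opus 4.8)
The plan is to reduce \eqref{eq.svP} to a single-coordinate computation, and then obtain \eqref{eq.mnepsilon} from it by a union bound (linearity of expectation) over triples.

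First I would fix a triple of distinct indices $i,j,k$ and analyse one coordinate $\ell\in\{1,\dots,m\}$ at a time. The bits $(a_{i,\ell},a_{j,\ell},a_{k,\ell})$ are uniform over $\{0,1\}^3$, i.e.\ eight equally likely patterns; writing $v_\ell=a_{i,\ell}+a_{j,\ell}-a_{k,\ell}$, a direct enumeration shows that $v_\ell\notin\{-1,0,1\}$ occurs for exactly one pattern, namely $(1,1,0)$, which gives $v_\ell=2$. Hence $\Pr(v_\ell\in\{-1,0,1\})=7/8$. Since the $m$ coordinates of $\v a_i,\v a_j,\v a_k$ are drawn independently, the events $\{v_\ell\in\{-1,0,1\}\}$ for $\ell=1,\dots,m$ are mutually independent, and multiplying the per-coordinate probabilities yields $\Pr(\v v\in\{-1,0,1\}^m)=(7/8)^m$, which is \eqref{eq.svP}.

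Next I would pass from one triple to all triples. I read the ``expectation of $\v v\in\{-1,0,1\}^m$'' as the expected number $X$ of triples of distinct indices $(i,j,k)$ for which $\v a_i+\v a_j-\v a_k\in\{-1,0,1\}^m$. By linearity of expectation and the per-triple probability just computed, $\mathbb{E}[X]$ equals the number of such triples times $(7/8)^m$, hence $\mathbb{E}[X]\le n^3(7/8)^m$. It therefore suffices to choose $m$ with $n^3(7/8)^m\le\epsilon$, i.e.\ (taking logarithms) $m\log(8/7)\ge 3\log n-\log\epsilon$. To discharge the constant I would use the lower bound $\log(8/7)>\tfrac{3}{16}$: this follows from $16\log_2 7<45$, equivalently $7^{16}<2^{45}=8^{15}$. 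Then $\tfrac{3}{\log(8/7)}<16$ and $\tfrac{1}{\log(8/7)}<\tfrac{16}{3}<6$, so the choice $m\ge 16\log n-6\log\epsilon$ forces $m\log(8/7)\ge 3\log n-\log\epsilon$ (the gap between $6$ and $16/3$ is nonnegative since $-\log\epsilon>0$ for $\epsilon\in(0,1)$, and $\log n\ge0$ absorbs the mild rounding in the $3\log n$ term), which is \eqref{eq.mnepsilon}.

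The computation is essentially routine; the two points that actually require care are (i) invoking coordinatewise independence so that the per-coordinate factor $7/8$ legitimately multiplies up to $(7/8)^m$, and (ii) fixing the clean constants $16$ and $6$ — that is, certifying $\log(8/7)>3/16$ via $7^{16}<8^{15}$ and checking that the deliberately loose rounding keeps the final inequality pointing the right way. It is also worth noting explicitly that the $16\log n$ term arises exactly from bounding the number of ordered triples by $n^3=2^{3\log_2 n}$, so the same argument degrades gracefully (with even smaller probabilities) for linear combinations of more than three of the $\v a_i$'s.
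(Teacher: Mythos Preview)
Your proposal is correct and follows essentially the same route as the paper: a per-coordinate enumeration giving the $7/8$ factor, independence across coordinates to get $(7/8)^m$, then a union bound over at most $n^3$ triples and the numerical check $3/\log(8/7)<16$, $1/\log(8/7)<6$ to land on the stated constants. The only difference is that you make the constant verification explicit via $7^{16}<8^{15}$, whereas the paper simply asserts the two inequalities.
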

\begin{proof}
Since all elements in $\v a_i$'s are either $0$ or $1$, each with $1/2$ probability,  for an arbitrary element in $\v v$, it is of probability $7/8$ that it belongs to $\{-1,0,1\}$. As $\v v$ is of length $m$, the probability of $\v v\in \{-1,0,1\}^m$ is thus $(\frac{7}{8})^m$. Since there are at most $n^3$ triplets of  $\{\v a_i, \v a_j, \v a_k\}$'s, to guarantee $n^3(\frac{7}{8})^m<\epsilon$ it requires $3\log n+m (\log (\frac{7}{8}))<\log \epsilon \rightarrow m> (-3\log n+\log \epsilon)/\log (\frac{7}{8})$. As $-3/(\log \frac{7}{8})<16$ and $1/(\log \frac{7}{8})> -6$ thus \eqref{eq.mnepsilon} holds. 
\end{proof}

Thus, it is of high probability that $\v v_i$'s are either $\v a_i$'s or $\{\v a_i-\v a_j\}$'s. Conversely, $\v a_i$'s can be determined using linear combinations of $\v v_i$'s combined with the fact that $\v a_i$ is binary. This can be formulated as the intersection $\mathcal{L}(\{\v v_i\}_{i=1}^{n}) \cap \{0,1\}^m$. Hence, we conclude that if $\v a_i$ follows Definition \ref{def.rhssp}, it is of high probability that 
\begin{align}\label{eq.vi}
    \{\v a_i\}_{i=1}^{n}\cap \mathcal{L}(\{\v v_i\}_{i=1}^{n}) \cap \{0,1\}^m=\{\v a_i\}_{i=1}^{n}.
\end{align}

\begin{figure*}[ht]
\begin{subfigure}{0.30\textwidth}
\includegraphics[width=0.9\linewidth]{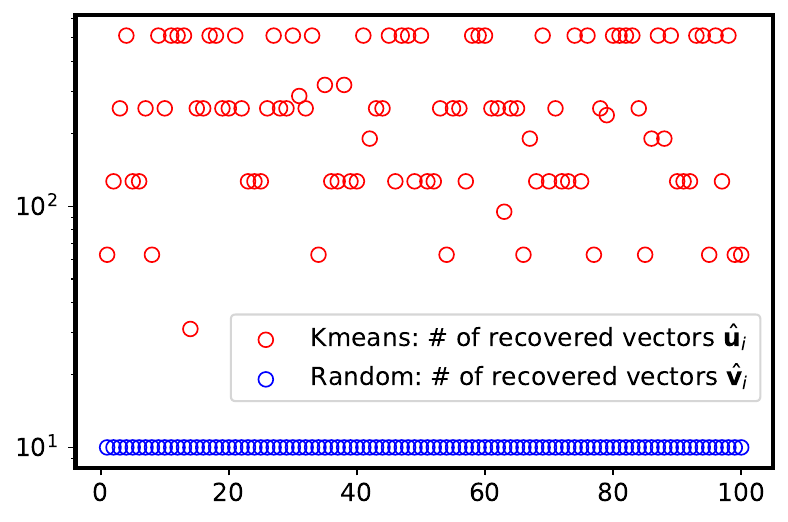} 
\caption{Number of recovered binary vectors: $\hat{\v u}_i$ and $\hat{\v v}_i$  }
\end{subfigure}
 \hspace{0.03\linewidth}
\begin{subfigure}{0.30\textwidth}
\includegraphics[width=0.9\linewidth]{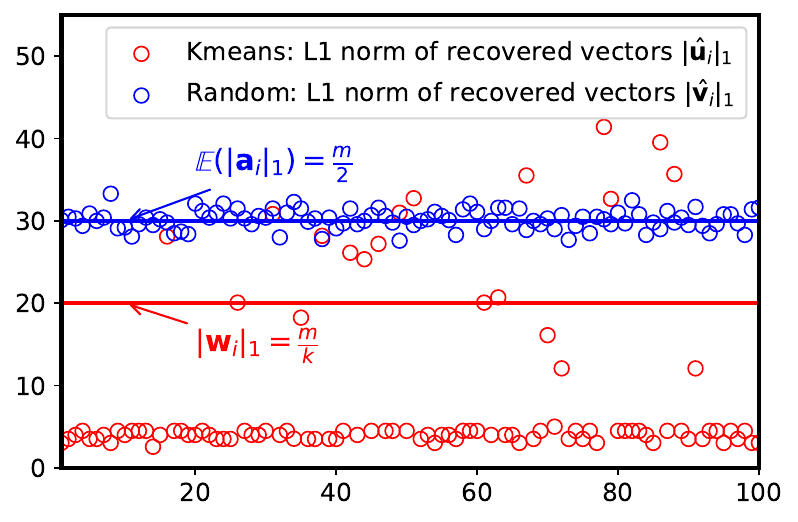}
\caption{Averaged $L_1$ norm of $\|\hat{\v u}_i\|_{1}$, $\|\hat{\v v}_i\|_{1}$  and corresponding true vectors $\|\v w_i\|_{1}$, $\|\v a_i\|_{1}$}
\end{subfigure}
 \hspace{0.03\linewidth}
\begin{subfigure}{0.29\textwidth}
\includegraphics[width=0.9\linewidth]{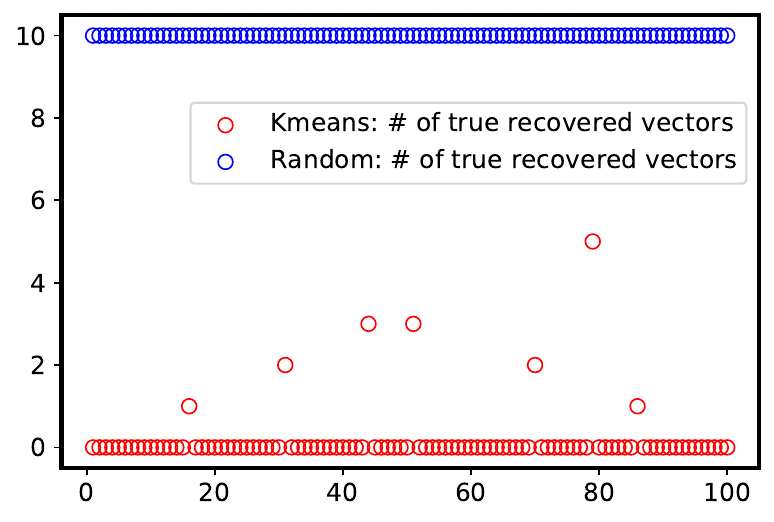}
\caption{Number of true vectors among recovered vectors: $|\{\hat{\v u}_i\} \cap \{\v w_i\}|$ and $|\{\hat{\v v}_i\} \cap \{\v a_i\}|$}
\end{subfigure}

\caption{Comparisons of random instance and K-means instance of HSSP in terms of three properties over 100 Monte Carlo runs.}
\label{fig.res}
\end{figure*}


\subsection{K-means instances of HSSP}
Compare \eqref{eq.khsspC} in K-means clustering to \eqref{eq.hssp}, we can see that the input private data $x_i$'s in the former are real numbers while in the latter they are defined within the modular ring $\mathbb{Z}_{Q}$. However, this discrepancy can be addressed by transforming $x_i$ into $\mathbb{Z}_{Q}$, where floating points can be scaled to integers and the modulo additive inverse can be used for representing negative numbers.  The modulo operation will not make a difference if a sufficiently large $Q$ is considered. We now proceed to discuss the property of the weight matrix $\v W\in \{0,1\}^{m\times n}$ (assuming $m=kt_{\max}$) in K-means clustering. Let $\v w_1,\ldots, \v w_n$ denote the column  vectors of $\v W$. Given that each element in $\v w_i$ signifies  whether the $i$-th input sample $x_i$ is grouped into the corresponding cluster or not, if two samples say $x_i$ and $x_j$ belong to the same cluster at the end of the iteration process, it's unlikely that their cluster assignments would frequently oscillate during the execution of the algorithm. Such result will enforce the following property of the corresponding $\v w_i$ and $\v w_j$:
\begin{align}\label{eq.wij}
    \|\v w_i -\v w_j\|_{1}\ll m/k.
\end{align}
This essentially suggests that the differences in the cluster assignments across all iterations are generally much less than the average length of cluster assignments.

Due to the nature of clustering, the number of clusters is typically much smaller than the number of samples being grouped.  Thus, there will be numerous pairs satisfying \eqref{eq.wij}. A direct implication is that $\v W$ is likely to be rank deficient because of dependencies between $\v w_i$'s.   
We define the K-means variant of HSSP as follows:
\begin{definition}\label{def.khssp} \textbf{A K-means instance of HSSP}
Let $x_1, \ldots, x_n$ be  independently integers sampled in $\mathbb{Z}_Q$. Let $k$ denote the number of clusters and $\v w_1, \ldots, \v w_n \in \{0,1\}^m$ be vectors satisfying $\|\v w_i\|_{1}=m/k$.  They satisfy \eqref{eq.khsspC} and there exists a number of pairs $\v w_i$ and $\v w_j$ satisfying \eqref{eq.wij}. The goal is to recover the hidden $x_i$'s and $\v w_i$'s given $\v c$, according to Definition \ref{def.hssp}.  
\end{definition}

Let $\{\v u_i\}_{i=1}^{n}$ be the short vectors of $(\mathcal{L}^{\bot}(\v W))^{\bot}$ founded by the BKZ algorithm. We note that it is of high probability that 
 \begin{align}\label{eq.ui}
    \{\v w_i\}_{i=1}^n \cap \mathcal{L}(\{\v u_i\}_{i=1}^{n}) \cap \{0,1\}^m=\emptyset. 
     \end{align}
The rationale is described as follows: since $\mathcal{L}(\v W)\subseteq (\mathcal{L}^{\bot}(\v W))^{\bot}$, it follows that for any pair of $i$ and $j$ where $i\neq j$, $\v w_i-\v w_j \in  (\mathcal{L}^{\bot}(\v W))^{\bot}$. For those pairs $(\v w_i,\v w_j)$ satisfying \eqref{eq.wij}, the lengths of $\v w_i -\v w_j$ and some of  their linear combinations  will be shorter than $\{\v w_i\}_{i=1}^n$. This essentially indicates that there's a plethora of binary vectors within the lattice $\mathcal{L}(\{\v u_i\}_{i=1}^{n})$ that are shorter than $\{\v w_i\}_{i=1}^n$, i.e., \eqref{eq.ui} will hold with high probability.  In the coming section, we will give empirical validations to substantiate this claim.

To have a clear overview of random instances and K-means instances of HSSP,  we summarize their key properties in Tab. \ref{tab:overview}.

\section{Experimental validations and discussions}
We now proceed to consolidate our results with experimental validations. We deploy the widely used Iris dataset which comprises 150 samples, each has four attributes and there are three categories of flower types \cite{fisher1936use}. We set the number of clusters $k=3$ and the maximum iteration number as 100. Fig. \ref{fig.res} demonstrates the attack results of both random instances and K-means instances of HSSP over 100 Monte Carlo runs. In each experiment, we randomly subsample $n=10$ samples and $m=60$ observations from the obtained centroids data to ensure the feasibility of completing 100 Monte Carlo runs.
By inspecting plots (a) and (b) we can see that in the K-means case, the algorithm identifies numerous short binary vectors but unfortunately their $L_1$ norm often deviates from the true length $m/3$ given $k=3$. In contrast, in the random instance of HSSP, the $L_1$ norm of discovered short binary vectors $\v v_i$ aligns well with the expected value $m/2$. Consequently, the number of true recovered vectors in the K-means case is often zero out of a total of $n=10$. On the other hand,  for the case of the random instance, a consistent full recovery is observed, as emphasized in plot (c).
Hence, the results in \eqref{eq.vi} and \eqref{eq.ui} are empirically verified, confirming our claim that \emph{given the knowledge of the intermediate centroids, NS attack is not able to the hidden weight vector $\v w_i$'s, thus also private data $x_i$'s}.

Two remarks are in place here. Firstly, the attack results of the other two HSSP attacks \cite{coron2020polynomial,coron2021provably} will also fail in recovering the true hidden vector $\v w_i$'s as there exist lots of false positive binary vectors which are more readily recoverable than the true ones. Secondly, even if $\v w_i$'s are successfully recovered, it is still difficult for the adversary to determine what is the true private data $x_i$'s. This is because there will be infinitely many solutions of $x_i$'s given that the hidden weight matrix $\v W$ is rank-deficient. Hence, overall we conclude that given the knowledge of all intermediate centroids it is infeasible for the adversary to reconstruct the private data $x_i$'s. 
\section{Conclusion}
In this paper we delved deep into the privacy implications of  revealing intermediate centroids in federated K-means clustering. Our novel approach casts this privacy challenge within a cryptographic lens, specifically the hidden subset sum problem (HSSP). Our rigorous analysis suggests that adversaries are unlikely to reconstruct private data given intermediate centroids, which implies their disclosure might not breach privacy.  This insight potentially paves the way for efficient federated clustering algorithms without compromising privacy.

\newpage
\bibliographystyle{IEEEtran}

\bibliography{dualpath}

\end{document}